\newcommand{\levy}{L\'{e}vy }
\newcommand{\p}{{\mathbb P}}
\newcommand{\e}{{\mathbb E}}
\newcommand{\D}{{\mathrm d}}
\newcommand{\oX}{{\overline X}}
\newcommand{\ind}[1]{\mbox{\rm\large  1}_{\{#1\}}}
\newtheorem{theorem}{Theorem}[section]
\newtheorem{lemma}{Lemma}[section]
\newtheorem{remark}{Remark}[section]
\newtheorem{corollary}{Corollary}[section]
\begin{document}

\title[Taxation and capital injections]{Power identities for L\'{e}vy risk models under taxation and capital injections}
\author[H.~Albrecher and J. Ivanovs]{Hansj{\"o}rg Albrecher\hspace{0.1cm} and\hspace{0.1cm} Jevgenijs Ivanovs}
\address{Department of Actuarial Science, Faculty of Business and Economics, University of Lausanne}
\thanks{Supported by the Swiss National Science Foundation Project 200020-143889.}
\begin{abstract}
In this paper we study a spectrally negative L\'{e}vy process which is
refracted at its running maximum and at the same time reflected from
below at a certain level. Such a process can for instance be used to
model an insurance surplus process subject to tax payments according to
a loss-carry-forward scheme together with the flow of minimal capital
injections required to keep the surplus process non-negative. We
characterize the first passage time over an arbitrary level and the
cumulative amount of injected capital up to this time by their joint
Laplace transform, and show that it satisfies a simple power relation
to the case without refraction, generalizing results
by~\cite{albrecher_hipp_taxation} and~\cite{albrecher_renaud_taxLevy}.
It turns out that this identity can also be extended to a certain type
of refraction from below. The net present value of tax collected before
the cumulative injected capital exceeds a certain amount is determined,
and a numerical illustration is provided.
\end{abstract}

\keywords{spectrally-negative \levy process, exit problems, collective risk theory, insurance, capital injections, dividends, alternative ruin concepts}
\maketitle

\section{Introduction}
The aim of this paper is to study certain power relations of level
crossing quantities for spectrally negative L\'{e}vy processes, which are
motivated by insurance applications.
Concretely, assume that the surplus process of an insurance portfolio
is modeled by a spectrally negative L\'{e}vy process, and tax payments on
profits according to
a loss-carry-forward scheme are implemented by paying a certain
proportion $\gamma$ of the premium income, whenever the surplus
process is at its running maximum.
For a constant value of $\gamma$, it was shown by~\cite
{albrecher_hipp_taxation} and~\cite{albrecher_renaud_taxLevy} that the
probability of the resulting process
to stay positive is intimately connected to the one without tax
payments by a simple power relation (see also
\cite{abbr,kyprianou_ott,kyprianou_zhou_tax} for extensions).
The implemented tax rule can alternatively be seen as a general profit
participation scheme for shareholders, which for the special case of
$\gamma=1$ reduces to
a horizontal dividend barrier strategy. Whereas in classical models
business is stopped as soon as the surplus is negative, it is natural
to consider the amount of capital
needed to bring the surplus back to zero whenever it turns negative and
henceforth continue the business operations. Under horizontal dividend
payments and
a compound Poisson model for insurance claims, this question was
considered by~\cite{dicks04}, and~\cite{kuls} showed that
it can be optimal for shareholders to ``save'' the insurance business
in this way (for another injection scheme see~\cite{nied}).

In this paper we consider capital injections below zero for the general
case $\gamma\leq1$.
This amounts to study level crossing events for a spectrally negative
L\'{e}vy process refracted at its running maximum and at the same time
reflected at zero.
We characterize the first passage time over an arbitrary level and the
cumulative amount of injected capital up to this time by their joint
Laplace transform,
and establish a simple power relation to the case without refraction.
From the proof it becomes clear that such a power identity can not
hold, if
reflection from below is generalized to refraction at the running
minimum. However, if refraction always starts at the same fixed level,
a power identity still holds.

In Section \ref{sec:refraction}, we discuss simultaneous refraction
and reflection. Section \ref{sec3} then states the main results,
which are proved in Section \ref{sec5}. In Section \ref{sec4} we
consider an application of the obtained formula to determine the net
present value of tax
collected before the cumulative injected capital exceeds an exponential
amount, and give a concrete numerical example for a compound Poisson
risk model.
Finally, in Section \ref{sec6} we illustrate with yet another example
that power identities hold in wide generality. Concretely, we use our
proof technique to extend the power tax identity for first passage times
(without capital injections) to a relaxed concept of ruin which was
considered recently in the literature.

\section{Refraction and reflection}\label{sec:refraction}
For a c\`adl\`ag sample path $X_t$ of any stochastic process, consider
reflection of $X_t$ at a level~$b$ (from above) defined by
$Y_t=X_t-U_t\leq b$, where $U_t$
is a non-decreasing c\`adl\`ag function with $U_0=0\vee(X_0-b)$, whose
points of increase are contained in the set $\{t\geq0: Y_t=b\}$.
This identifies $U_t$ in a unique way, and implies that $U_t=0\vee(\oX
_t-b)$, where $\oX_t=\sup\{X_s:0\leq s\leq t\}$, see e.g.~\cite
{kella_reflection}. Essentially, $U_t$ evolves as the supremum process.

For an arbitrary $\gamma\in\mathbb R$ we call the process $X_t-\gamma
U_t$ a refraction from above, which has some interpretations in
insurance risk theory. For $\gamma=1$ we retrieve the reflected
process, which can model an insurance surplus process with dividends
paid out according to a barrier strategy with barrier~$b$, whereas
$\gamma\in(0,1)$ refers to an insurance surplus process taxed
according to a loss-carried forward scheme (see e.g. \cite
{albrecher_hipp_taxation, albrecher_renaud_taxLevy}). A value $\gamma
<0$ could refer to a model with stimulation
proportional to the increase of the maximum. Finally, the case of
$\gamma>1$ can be interpreted as inhibition, which will not be
considered further in the sequel.
In general, $\gamma$ could be allowed to depend on the current value
of $U_t$ (or on the running maximum of the refraction itself),
which leads to a more general process of the form $X_t-\int
_0^{U_t}\gamma(x)\D x$. For simplicity, we will however assume
throughout this work that $\gamma$ is a constant, and only give some
comments in Remark~\ref{rem:gammax}.

This paper focuses on processes refracted from above with rate $\gamma
\leq1$ and reflected from below.
Such a process can be defined by using one-sided refraction from above
and one-sided reflection from below locally,
and then gluing segments of paths together, see also~\cite[Sec.\
XIV.3]{APQ} where a similar procedure is used to define a two-sided reflection.
More precisely, we do the following for a given interval $[a,b]$, where
$a$ is the level for reflection from below, and $b$ is the initial
level for refraction from above.
First, we consider a free process $X_t$ until it exits $[a,b]$, at
which moment we start either reflection from below (it exits through
$a$) or refraction from above (it exits through $b$).
Assuming (w.l.o.g.) the latter, we consider the time at which the
corresponding refraction goes below $a$, and then start reflection from below.
When this reflection goes above the running maximum, the refraction
from above starts, and so on,
see Figure~\ref{fig:pic2}
for an illustration of such a process.
%
\begin{figure}[t]
\centering
\includegraphics[width=3in]{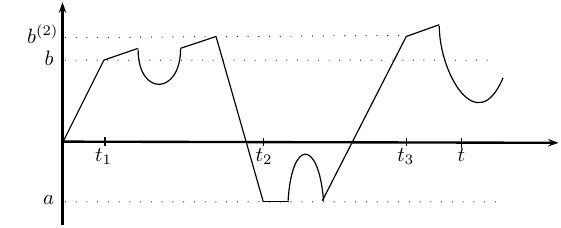}
\caption{A sample path refracted from above and reflected from below.}
\label{fig:pic2}
\end{figure}

The above procedure is described rigorously in the form of an algorithm
in the Appendix, where we also allow for two-sided refraction.
For the present model it results in a representation
%
\begin{equation}\label{eq:model}
Y_t=X_t+L_t-\gamma U_t,
\end{equation}
where it is assumed that $X_0\in[a,b]$, and $\gamma\leq1$ to avoid
the case of inhibition.
Moreover, $L_t$ and $U_t$ are non-decreasing c\`adl\`ag functions,
and the points of increase of $L_t$ and $U_t$ are contained in the sets
$\{t\geq0:Y_t=a\}$ and $\{t\geq0:Y_t=\overline{Y_t}\vee b\}$ respectively.
Finally, note that $L_t$ and $U_t$ are interrelated and both depend on
the parameter $\gamma$.

\section{A power identity}\label{sec3}
Throughout this work we assume that $X_t$ is a spectrally negative
L\'{e}vy process with Laplace exponent $\psi(\alpha)$ so that $\e
e^{\alpha
X_t}=e^{\psi(\alpha)t}$ for $\alpha\geq0$.
Define the first passage times
\[
\tau_y^\pm=\inf\{t\geq0:\pm X_t>y\}
\]
and recall that for all $q\geq0$ there exists a unique continuous
function $W^q:[0,\infty)\rightarrow\mathbb R_+$, such that $W^q(y)>0$
for $y>0$,
%
\begin{equation}\label{eq:scale}\e_x [e^{-q\tau_y^+};\tau_y^+<\tau
_0^-]=W^q(x)/W^q(y)\text{ for }y\geq x\geq0,y>0,
\end{equation}
and $\int_0^\infty e^{-\alpha y}W^q(y)\D y=1/(\psi(\alpha)-q)$ for
$\alpha$
larger than the rightmost zero of $\psi(\alpha)-q$. This $W^q$ is called
a scale function.

For a L\'{e}vy risk model with tax, it was shown by \cite
{albrecher_renaud_taxLevy} that certain probabilities and transforms
can be related to their analogues under no taxation by power identities.
We will now generalize such power identities to the setting of
a refraction from above and reflection from below. Consider a process
$Y_t$ given by~\eqref{eq:model},
where $X_0=x>0$, the reflection barrier is placed at the level $a=0$,
and the refraction from above at rate $\gamma\leq1$ is applied
immediately, i.e. $b=x$
(it is straightforward to extend our result to $b>x$ using identities
for reflected L\'{e}vy processes).
Let also
\[
T_y=\inf\{t\geq0:Y_t>y\}
\]
be the first passage time of the refraction above the level~$y$.
\begin{theorem}\label{thm:main}
For $\gamma<1$ and $q,\theta\geq0$ it holds that
%
\begin{equation}\label{eq:main}\e^{\gamma}_x e^{-q T_y-\theta
L_{T_y}}=\left(\e^0_x e^{-q T_y-\theta L_{T_y}}\right)^\frac
{1}{1-\gamma},
\end{equation}
where $y\geq x>0$ and $\e^\gamma_x$ denotes the expectation operator
for the model defined by~\eqref{eq:model} with $a=0$ and $b=x$.
\end{theorem}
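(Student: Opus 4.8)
The plan is to show that $V_\gamma(x,y):=\e^\gamma_x e^{-qT_y-\theta L_{T_y}}$ factorises multiplicatively over the running maximum of $Y$, and that replacing $\gamma=0$ by a general $\gamma<1$ rescales the resulting per-level rate by the factor $1/(1-\gamma)$; the identity \eqref{eq:main} then follows by integration.

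First I would establish the factorisation $V_\gamma(x,y)=V_\gamma(x,z)\,V_\gamma(z,y)$ for $0<x<z<y$. Because $X$ is spectrally negative and refraction only pushes the path down, $Y$ reaches level $z$ continuously and is at a running maximum with $Y_{T_z}=z$ at that instant, while the reflection barrier $a=0$ is unchanged before and after $T_z$. Capital is injected only on $\{Y_t=0\}$, hence strictly below the maximum (all relevant maxima lie in $[x,y]\subset(0,\infty)$), so $L_{T_y}=L_{T_z}+(L_{T_y}-L_{T_z})$ splits additively and the increment after $T_z$ is exactly the capital injected by the model restarted at $z$. The strong Markov property at $T_z$ then gives this cocycle relation, and hence $V_\gamma(x,y)=\exp\!\big(-\int_x^y v_\gamma(m)\,\D m\big)$ for a non-negative rate $v_\gamma$; the same holds with $\gamma=0$.

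Next I would compare the rates $v_\gamma$ and $v_0$ by examining the ascent of the maximum. While $Y$ is strictly below its running maximum $m$, the refraction term $U$ is frozen, so $Y$ evolves precisely as the spectrally negative \levy process reflected at $0$ from below; consequently an excursion of $Y$ below $m$ --- its length and the capital it injects --- has a law $\mathcal E_m$ depending only on $m$, identical in the $\gamma$ and the $\gamma=0$ models. Refraction affects only the climb of the maximum: at a record $\D U$ equals the increment of the driving (reflected) maximum $\xi$, and $Y_t=X_t+L_t-\gamma U_t$ gives $\D\overline{Y}=(1-\gamma)\,\D\xi$. Thus raising $\overline{Y}$ by $\D m$ requires the driving maximum to climb $\D m/(1-\gamma)$, accumulating $1/(1-\gamma)$ times the excursion cost incurred by the $\gamma=0$ process per unit of its own maximum at the same level $m$. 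Encoding this through the exponential formula for the Poisson point process of excursions (equivalently, via the logarithmic derivative of a scale-function expression for $V_0$) yields $v_\gamma(m)=\frac{1}{1-\gamma}\,v_0(m)$.

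Integrating gives $V_\gamma(x,y)=\exp\!\big(-\tfrac{1}{1-\gamma}\int_x^y v_0(m)\,\D m\big)=\big(V_0(x,y)\big)^{1/(1-\gamma)}$, which is \eqref{eq:main}; the explicit value of the base case $V_0$ can be read off from $W^q$ together with the $\theta$-dependent boundary term for the local time at $0$, but the power relation itself needs only the two structural facts above. The main obstacle is the identity $v_\gamma=\frac{1}{1-\gamma}v_0$: one must match the excursion measures \emph{and their local-time normalisations} across the two models, showing that ``per unit of driving maximum'' for the refracted process corresponds exactly to ``per unit of maximum'' for the reflected one, so that only the deterministic factor from $\D\overline{Y}=(1-\gamma)\,\D\xi$ survives. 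Controlling the injected capital inside the excursions (same law at each $m$, additive splitting, which uses $m>0$) and justifying the infinitesimal record calculus for the refracted maximum are the technical heart; the strict inequality $\gamma<1$ enters precisely here, the degenerate barrier case $\gamma=1$ being excluded.
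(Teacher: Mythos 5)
Your proposal is correct in substance and rests on exactly the two structural facts the paper isolates --- excursions of $Y$ below its running maximum are unaffected by $\gamma$ because the reflecting barrier sits at the \emph{fixed} level $0$, while the maximum climbs at rate $(1-\gamma)$ relative to the driving maximum (the paper's identity $\overline{Y_t}=(1-\gamma)U_t$) --- but you formalize them via It\^o excursion theory, whereas the paper deliberately avoids that machinery. The paper instead recasts the double transform as the survival probability of a doubly killed process, introduces the auxiliary process $\tilde Y_t=X_t+L_t$ (so that $\overline{\tilde Y_t}=U_t$ and $\tilde T_y=T_{(1-\gamma)y}$), observes that for $\gamma=1$ the barrier is always at distance $x$ from the maximum so lack of memory gives the \emph{exact} exponential $\p^1(\tilde T_y<\infty)=e^{-\lambda(x)y}$, proves by an elementary sample-path comparison (Lemma~\ref{lem:main}) that $\p^\gamma(\tilde T_h<\infty)=\p^1(\tilde T_h<\infty)+o(h)$, and then solves the resulting differential equation~\eqref{eq:diff}, identifying $\lambda=Z'/Z$ from the $\gamma=0$ case. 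This buys two things your sketch leaves open. First, your passage from the cocycle $V_\gamma(x,y)=V_\gamma(x,z)V_\gamma(z,y)$ to $V_\gamma(x,y)=\exp\bigl(-\int_x^y v_\gamma(m)\,\D m\bigr)$ presumes absolute continuity of $-\ln V_\gamma$ in the level, which a multiplicative cocycle alone does not give; the paper obtains it because $\lambda(x)=Z'(x)/Z(x)$ is continuous, and it even flags the residual subtlety that the Markov argument yields only a right derivative (resolved by noting that a continuous function with vanishing right derivative on an interval is constant). Second, the matching of excursion measures \emph{and local-time normalisations} that you rightly identify as the technical heart of $v_\gamma=\frac{1}{1-\gamma}v_0$ --- delicate for unbounded-variation paths, where ``per unit of driving maximum'' needs a normalisation-free meaning --- is precisely what Lemma~\ref{lem:main} accomplishes by comparing with the $\gamma=1$ process rather than the $\gamma=0$ one: there exact memorylessness replaces the exponential formula for the excursion point process, so no excursion measure ever has to be normalised. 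Your route can certainly be completed (spectral negativity makes the ascending ladder height a pure drift, so indexing excursions by the level of the maximum is legitimate), and it arguably illuminates why the identity fails for two-sided refraction as in Remark~\ref{rem:exc}; but as written, the exponential-formula step and the absolute-continuity step remain assertions, and these are exactly the points the paper's more elementary argument is engineered to circumvent.
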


It should be noted that the right hand side of~\eqref{eq:main} can be
identified using results on reflected L\'{e}vy processes. In particular,
\cite{ivanovs_levy} shows that
%
\begin{equation}\label{eq:Z}\e^0_x e^{-q T_y-\theta
L_{T_y}}=Z^{q,\theta}(x)/Z^{q,\theta}(y),
\end{equation}
where $Z^{q,\theta}(x)$ is a so-called second scale function given by
\[
Z^{q,\theta}(x)=e^{\theta x}[1-(\psi(\theta)-q)\int_0^x e^{-\theta
y}W^q(y)\D y],
\]
see also~\cite{pist04} for the case when $\theta=0$.
Observe that
\[
\lim_{\theta\rightarrow\infty}\e^0_x e^{-q T_y-\theta L_{T_y}}=\e
^0_x [e^{-q T_y};L_{T_y}=0]=\e_x [e^{-q \tau^+_y};\tau^+_y<\tau
_0^-]=\frac{W^q(x)}{W^q(y)}.
\]
Similarly, for $\theta\rightarrow\infty$ the left-hand side of
\eqref{eq:main}
becomes the transform of the first passage time $T_y$ on the event that
it precedes ruin, hence we recover the tax identity (3.1) of~\cite
{albrecher_renaud_taxLevy} as a special case.

In the case $\gamma=1$ (corresponding to payments of dividends
according to a barrier strategy at the level $x$) we have $T_y=\infty$
for all $y\geq x$. Instead we look at
%
\begin{equation}\label{eq:rho}
\rho_y=\inf\{t\geq0:U_t>y\},
\end{equation}
which is the first time that the amount of accumulated dividends (or
taxes) exceeds a level~$y$.
\begin{theorem}\label{thm:dividends}
For $q,\theta\geq0$ and $x>0, y\geq0$ it holds that
%
\begin{equation}\label{eq:dividends}\e^1_x e^{-q \rho_y-\theta
L_{\rho_y}}=
e^{-\lambda^{q,\theta}(x) y},
\end{equation}
where
\[
\lambda^{q,\theta}(x)={Z^{q,\theta}}'(x)/Z^{q,\theta}(x)=\theta
-\frac{(\psi(\theta)-q)W^q(x)}{Z^{q,\theta}(x)}.
\]
\end{theorem}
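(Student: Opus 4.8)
The plan is to obtain Theorem~\ref{thm:dividends} as the boundary case $\gamma\uparrow 1$ of Theorem~\ref{thm:main}, exploiting an exact pathwise identity that ties the first passage time $T_y$ in the model with $\gamma<1$ to the dividend passage time $\rho_\cdot$ from~\eqref{eq:rho}. First I would record how the running maximum of $Y$ is governed by $U$. Since $x>0$, whenever $Y_t$ sits at its running maximum $\overline{Y_t}\ge x$ the lower regulator $L$ is constant, so locally $\D Y_t=\D X_t-\gamma\,\D U_t$; because $X$ is spectrally negative the path climbs continuously and the refraction relation forces $\D\overline{Y_t}=(1-\gamma)\,\D U_t$ on the (continuous) support of $\D U$. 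Together with $\overline{Y_0}=x$ and $U_0=0$ this yields $\overline{Y_t}=x+(1-\gamma)U_t$ for all $t$, whence
\begin{equation*}
T_y=\inf\{t\ge 0:\overline{Y_t}>y\}=\inf\{t\ge 0:U_t>(y-x)/(1-\gamma)\}=\rho_{(y-x)/(1-\gamma)},
\end{equation*}
valid for every $\gamma<1$ and $y\ge x$, with $L_{T_y}=L_{\rho_{(y-x)/(1-\gamma)}}$.

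Next I would feed this into Theorem~\ref{thm:main} and~\eqref{eq:Z}. Writing $y=x+(1-\gamma)w$ with $w\ge 0$ (so that $w$ plays the role of the level denoted $y$ in the statement) and applying the power identity followed by the explicit right-hand side of~\eqref{eq:main}, I obtain for every $\gamma<1$
\begin{equation*}
\e^\gamma_x e^{-q\rho_w-\theta L_{\rho_w}}=\left(\e^0_x e^{-qT_y-\theta L_{T_y}}\right)^{\frac{1}{1-\gamma}}=\left(\frac{Z^{q,\theta}(x)}{Z^{q,\theta}(x+(1-\gamma)w)}\right)^{\frac{1}{1-\gamma}}.
\end{equation*}
Setting $\delta=1-\gamma$ and letting $\delta\downarrow 0$, a first-order expansion of $\log Z^{q,\theta}$ (differentiable by its explicit representation in terms of $W^q$) shows that the right-hand side converges to $\exp\!\left(-w\,{Z^{q,\theta}}'(x)/Z^{q,\theta}(x)\right)=e^{-\lambda^{q,\theta}(x)w}$; the second expression for $\lambda^{q,\theta}$ then drops out by differentiating $Z^{q,\theta}(x)=e^{\theta x}[1-(\psi(\theta)-q)\int_0^x e^{-\theta u}W^q(u)\,\D u]$.

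The remaining and genuinely delicate step is to justify that the left-hand side passes to the same limit, i.e. that $\e^\gamma_x e^{-q\rho_w-\theta L_{\rho_w}}\to\e^1_x e^{-q\rho_w-\theta L_{\rho_w}}$ as $\gamma\uparrow 1$, and I expect this to be the main obstacle. The natural route is to fix a driving path of $X$ and show that the regulators $(L,U)$ produced by the reflection--refraction construction of Section~\ref{sec:refraction} depend continuously on $\gamma$, so that $\rho_w$ and $L_{\rho_w}$ converge almost surely to their $\gamma=1$ counterparts; since the integrand is bounded by $1$, bounded convergence then gives the claim, with the event $\{\rho_w=\infty\}$ causing no trouble because there the integrand vanishes. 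Care is needed in establishing this pathwise continuity of the gluing map and in ruling out a degenerate accumulation of dividends in the limit. As an alternative safeguard one could instead derive the exponential form $e^{-\lambda^{q,\theta}(x)w}$ directly from the strong Markov property of $Y$ at the barrier $x$, which makes $(\rho_w,L_{\rho_w})$ a bivariate process with stationary independent increments in $w$, and then identify the exponent $\lambda^{q,\theta}(x)$ via the excursion theory of the reflected process at level $x$.
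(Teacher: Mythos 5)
Your reduction is correct as far as it goes, and it takes a genuinely different route from the paper. You derive Theorem~\ref{thm:dividends} as the limit $\gamma\uparrow 1$ of Theorem~\ref{thm:main}; the paper never takes this limit. Instead it obtains the exponential form directly from the lack-of-memory property of the $\gamma=1$ model (equations \eqref{eq:Markov_1}--\eqref{eq:exp}, which is exactly your ``alternative safeguard''), and then identifies $\lambda^{q,\theta}(x)={Z^{q,\theta}}'(x)/Z^{q,\theta}(x)$ by combining the $\gamma=0$ formula \eqref{eq:Z} with the small-$h$ comparison Lemma~\ref{lem:main} and the resulting differential equation \eqref{eq:diff}. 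Your pathwise identity $\overline{Y_t}=x+(1-\gamma)U_t$, hence $T_{x+(1-\gamma)w}=\rho_w$, is precisely the paper's relation \eqref{eq:relTilda} in unshifted coordinates, and your limit computation on the right-hand side is sound, since $Z^{q,\theta}$ is continuously differentiable on $(0,\infty)$ (as $W^q$ is continuous) and the stated two expressions for $\lambda^{q,\theta}$ agree by direct differentiation.

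The genuine gap is exactly where you flag it: the convergence of the left-hand side, $\e^\gamma_x e^{-q\rho_w-\theta L_{\rho_w}}\to\e^1_x e^{-q\rho_w-\theta L_{\rho_w}}$ as $\gamma\uparrow 1$, rests on an asserted but unproven pathwise continuity of the gluing construction in $\gamma$. This is not a routine continuity statement. You must control the regulators over the entire random horizon $[0,\rho_w]$, during which the paths for different $\gamma$ separate at every excursion reaching the lower barrier, and these discrepancies can in principle accumulate over infinitely many reflection episodes; null-set pathologies (excursions hitting barrier levels exactly) must also be excluded. Contrast this with the paper's Lemma~\ref{lem:main}, which only compares the processes up to the first passage over an infinitesimal level $h$ --- essentially a single excursion --- where the path discrepancy is deterministically bounded by $(1-\gamma)h$, making the $o(h)$ estimate tractable; your claim is a macroscopic-horizon strengthening of that lemma and is plausibly \emph{harder} than the paper's whole argument. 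Your fallback does give the exponential form $e^{-\lambda^{q,\theta}(x)w}$ (it is verbatim \eqref{eq:Markov_1}--\eqref{eq:exp}), but it defers the identification of the exponent to unspecified ``excursion theory'', i.e.\ it postpones exactly the content that the $\gamma\uparrow 1$ limit was meant to deliver. If you want to complete your route, a promising repair is a monotone coupling on a fixed driving path of $X$: increasing $\gamma$ decreases the distance $x+(1-\gamma)U_t$ from the running maximum to the reflecting barrier, which (by induction over the gluing segments) should make $L^\gamma_t$ and $U^\gamma_t$ nondecreasing in $\gamma$, hence $\rho^\gamma_w$ nonincreasing in $\gamma$; one would then still need to identify the monotone limits with the $\gamma=1$ objects before invoking bounded convergence, and this argument must actually be supplied.
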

In a somewhat different form this formula appears also in~\cite{ivanovs_levy}.
We note that for $\theta=\infty$ one has to take $\lambda
^q(x)={W^q}_+'(x)/W^q(x)$, which is intimately related to the excursion
measure, see e.g.~\cite[Lem.\ 8.2]{kyprianou_Levy}.

\begin{remark}\label{rem:exc}
The power identity \eqref{eq:main} fails to hold for a two-sided
refraction (defined in Appendix) with~$\gamma_L<1$. The case of
reflection $\gamma_L=1$ is special because in this case we know the
distance to the (lower) reflection barrier at the first passage time $T_y$
(in other words, $a^{(n)}$ in the algorithm defining the two-sided
refraction is constant, see Appendix).

Nevertheless, from the proof in Section \ref{sec5} it becomes clear
that if one modifies the model and considers either refraction from
below always starting at a fixed level $a$ or always starting at a
fixed distance from the running maximum (rather than starting at the
current running minimum), then the power identity \eqref{eq:main} is
preserved also in the case $\gamma_L<1$.
\end{remark}
%
\section{Proofs}\label{sec5}
In this section we prove Theorem~\ref{thm:main} and Theorem~\ref
{thm:dividends}.
We construct an auxiliary process by a certain modification of paths of
the simultaneously refracted and reflected process.
This modification preserves excursions from the maximum, but leads to
the same `behavior at the maximum' as the one of the free process.
Furthermore, the auxiliary process corresponding to $\gamma=1$
exhibits a lack of memory property at its first passage times,
because the lower reflection barrier is always placed at a constant
distance from the maximum.
This gives rise to a certain exponent $\lambda(x)$, and allows to
relate this process to the processes corresponding to different $\gamma
$, see Lemma~\ref{lem:main}.
Subsequently the strong Markov property is applied to establish a
differential equation for the quantity of interest, which then yields
the results.

It is convenient to shift our process, so that $X_0=0$ and reflection
from below is applied at the level $-x<0$. Recall also that refraction
from above is applied immediately.
Note that $\e^\gamma e^{-q T_y}$ can be written as $\p^\gamma
(T_y<\infty)$ for an exponentially killed process, i.e.\ when $X_t$ is
sent to an additional absorbing state at an independent
exponentially distributed time $e_q$ with rate $q\geq0$.
The double transform $\e^\gamma e^{-q T_y-\theta L_{T_y}}$ is obtained
by additional killing at the time when $L_t$ surpasses an independent
exponentially distributed $e_\theta$.
Hence it suffices to analyze $\p^\gamma(T_y<\infty)$ for a doubly
killed process.

Let us fix some terminology and notation concerning the paths of $Y_t$.
Segments of a path of the process $Y_t-\overline Y_t$ in the
intervals when this difference is strictly negative are called
excursions of $Y_t$ (from the maximum).
The starting level of an excursion is the corresponding value of
$\overline Y_t$.
Next, consider a triplet $(Y_t,L_t,U_t)$ of paths (where each component
depends on the choice of $\gamma$) and define
\[
\tilde Y_t=X_t+L_t=Y_t+\gamma U_t.
\]
From the construction of $Y_t$ one can see that $\overline
{Y_t}=(1-\gamma)U_t$, which immediately yields $\overline{\tilde Y_t}=U_t$.
Letting
\[
\tilde T_y=\inf\{t\geq0:\tilde Y_t>y\}
\]
we see that $\tilde T_y=\rho_y$ and
for $\gamma<1$ also
%
\begin{equation}\label{eq:relTilda}
\tilde T_y=T_{(1-\gamma)y}.
\end{equation}
It is noted that we could have avoided constructing the auxiliary
process, since it is possible to use the stopping time $\rho_y$
instead of $\tilde T_y$.
But then the following arguments would become less visually appealing.


When $\gamma=1$ the reflecting barrier is always placed at a constant
distance $x$ from the maximum, which
together with the strong Markov property of $X_t$ implies that
%
\begin{equation}\label{eq:Markov_1}
\p^1(\tilde T_{y+z}<\infty|\tilde T_y<\infty)=\p^1(\tilde
T_z<\infty)
\end{equation}
for all $y,z>0$
(note that the memoryless property of the killing times $e_q$ and
$e_\theta$ is essential here). From~\eqref{eq:Markov_1} it follows
that there exists a $\lambda(x)\geq0$ such that
%
\begin{equation}\label{eq:exp}
\p^1(\tilde T_y<\infty)=e^{-\lambda(x) y},
\end{equation}
where $x$ denotes the distance between the reflecting barriers.
This provides the proof of Theorem~\ref{thm:dividends} up to the
identification of $\lambda(x)$.

\begin{lemma}\label{lem:main}
It holds for all $\gamma\leq1$ that
\[
\p^\gamma(\tilde T_{h}<\infty)=\p^1(\tilde T_{h}<\infty)+o(h)\text
{ as }h\downarrow0.
\]
\end{lemma}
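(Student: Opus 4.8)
The plan is to realise the $\gamma$-model and the $\gamma=1$-model on one probability space, driven by the same spectrally negative process $X_t$ and the same killing clocks $e_q,e_\theta$, and to show that the event $\{\tilde T_h<\infty\}$ is decided in the same way for both, except on a set of probability $o(h)$. In the coordinates $\tilde Y_t=X_t+L_t$ the two models differ \emph{only} through the position of the lower reflecting barrier: recalling $\overline{\tilde Y_t}=U_t$, for $\gamma=1$ it sits at $U_t-x$, i.e.\ at the constant distance $x$ below the running maximum, whereas for $\gamma<1$ it sits at $\gamma U_t-x$, i.e.\ at distance $x+(1-\gamma)U_t$. Since $U_t\le h$ for every $t\le\tilde T_h$, up to the first passage time these two barriers stay within $(1-\gamma)h=O(h)$ of each other.

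First I would isolate the portion of the trajectory on which the two models literally agree. As long as no barrier has been touched, both have $L_t\equiv 0$, hence $\tilde Y_t=X_t$, so the two paths, their common maximum $\oX_t$, and the time-killing by $e_q$ all coincide. Because $\gamma U_t-x\le U_t-x$, the barrier of the $\gamma=1$-model is the higher of the two and is therefore met first; let $\sigma$ be that first contact time, i.e.\ the first instant at which the drawdown $\oX_t-X_t$ reaches $x$. On $\{\sigma\ge\tilde T_h\}$ the maximum attains level $h$ while $L\equiv 0$, so $\{\tilde T_h<\infty\}$ holds for both models or for neither and contributes nothing to the difference $\p^\gamma(\tilde T_h<\infty)-\p^1(\tilde T_h<\infty)$. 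Hence this difference is supported on $\{\sigma<\tilde T_h\}$. Now $\{\sigma\ge\tilde T_h\}$ is exactly the event of reaching maximum level $h$ before the first reflection, which is the $\theta\to\infty$ specialisation of Theorem~\ref{thm:dividends}; it equals $e^{-\lambda^q(x)h}$ with $\lambda^q(x)={W^q}_+'(x)/W^q(x)$, so that
\[\p(\sigma<\tilde T_h)=1-e^{-\lambda^q(x)h}=\lambda^q(x)\,h+o(h)=O(h).\]

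It remains to control the discrepancy on $\{\sigma<\tilde T_h\}$. At time $\sigma$ the two models share the same state and the same maximum $U_\sigma\le h$; thereafter they evolve as processes reflected from below at barriers differing by at most $(1-\gamma)U_t\le(1-\gamma)h$, and the remaining target — the height of level $h$ above the current maximum — is at most $h-U_\sigma\le h$. As $h\downarrow 0$ both the barrier mismatch and this residual height tend to $0$, so conditionally on $\mathcal F_\sigma$ the two probabilities of subsequently reaching maximum level $h$ converge to one and the same limit; that is, $|\p^\gamma(\tilde T_h<\infty\mid\mathcal F_\sigma)-\p^1(\tilde T_h<\infty\mid\mathcal F_\sigma)|=o(1)$ on $\{\sigma<\tilde T_h\}$. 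Combining this with the bound of the previous paragraph via the strong Markov property at $\sigma$ gives
\[\bigl|\p^\gamma(\tilde T_h<\infty)-\p^1(\tilde T_h<\infty)\bigr|\le\p(\sigma<\tilde T_h)\cdot o(1)=O(h)\cdot o(1)=o(h),\]
which is the assertion of the lemma.

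I expect the last step to be the main obstacle: upgrading ``the barriers differ by $O(h)$'' to ``the passage probabilities differ by $o(1)$'' requires a continuity estimate for a reflection whose lower barrier is itself tied to the running maximum of the reflected process, so the standard Lipschitz property of the one-sided Skorokhod map does not apply verbatim and must be adapted to this state-dependent reflection. A robust alternative, which moreover produces directly the exponent needed afterwards, is to argue by excursion theory: excursions from the maximum form a Poisson point process indexed by the maximum level $u$, within a single excursion the maximum is frozen so the barrier lies at the fixed distance $x+(1-\gamma)u$, and hence the intensity of ``fatal'' excursions per unit of $u$ equals the constant-distance rate $\lambda(x+(1-\gamma)u)$. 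This yields $\p^\gamma(\tilde T_h<\infty)=\exp\!\bigl(-\int_0^h\lambda(x+(1-\gamma)u)\,\D u\bigr)$, and since $\lambda$ is continuous at $x$ the integral equals $\lambda(x)h+o(h)$, so that $\p^\gamma(\tilde T_h<\infty)=e^{-\lambda(x)h}+o(h)=\p^1(\tilde T_h<\infty)+o(h)$.
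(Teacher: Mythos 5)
Your set-up and first reduction coincide with the paper's own: your event $\{\sigma<\tilde T_h\}$ (a drawdown of size $x$ before the running maximum gains $h$) is, up to boundary cases, the paper's event $\{\delta\le h\}$, the two models agree pathwise on its complement since $L\equiv 0$ there, and its probability is $O(h)$ — for which, note, you only need the \emph{existence} of the exponent in \eqref{eq:exp}, not the identification $\lambda^q(x)={W^q}_+'(x)/W^q(x)$, which in the paper's logical order is not yet available. The genuine gap is exactly where you flag it yourself: the claim that $|\p^\gamma(\tilde T_h<\infty\mid\mathcal F_\sigma)-\p^1(\tilde T_h<\infty\mid\mathcal F_\sigma)|=o(1)$ uniformly on $\{\sigma<\tilde T_h\}$ is asserted, not proved, and no off-the-shelf Lipschitz estimate for the Skorokhod map applies to a barrier tied to the running maximum of the reflected process itself; moreover the $e_\theta$-asymmetry (the $\gamma=1$ model accrues more capital injections, hence can be $e_\theta$-killed while the $\gamma$ model survives) is buried inside that $o(1)$. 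The paper closes precisely this hole without any continuity estimate, by an overshoot trick: by lack of memory of $\tilde Y^1$ at $\tilde T^1_h$, the event $\{\delta\le h,\,\tilde T^1_h<\infty,\,\tilde T^1_{h+(1-\gamma)h}=\infty\}$ has probability $O(h)\cdot O(h)=o(h)$; on the remaining event there is a single reflection excursion and the paths of $\tilde Y^1$ and $\tilde Y^\gamma$ differ by at most $(1-\gamma)h$, so survival of the $\gamma=1$ model up to level $h+(1-\gamma)h$ forces $\tilde T^\gamma_h<\infty$ pathwise; conversely, on $\{\tilde T^\gamma_h<\infty\}$ the only obstruction to $\tilde T^1_h<\infty$ is $e_\theta$ falling into a window of $L$-values of length $(1-\gamma)h$, of probability $o(h)$ by independence. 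That quantitative replacement of your $o(1)$ step is the missing idea in your first route.

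Your excursion-theoretic fallback is the right alternative — it is in the spirit of the general tax identities of \cite{kyprianou_zhou_tax}, and it would prove strictly more than the lemma, namely $\p^\gamma(\tilde T_h<\infty)=\exp\bigl(-\int_0^h\lambda(x+(1-\gamma)u)\,\D u\bigr)$, from which \eqref{eq:sol} follows at once. But as written it is a sketch with two unproven inputs. First, identifying the intensity of fatal excursions at maximum level $u$ with the constant-distance exponent $\lambda(x+(1-\gamma)u)$ requires constructing the marked excursion point process of $X$ from its maximum — including the reflection within each excursion at the frozen barrier, the $e_q$- and $e_\theta$-marks (the latter being an exchangeable thinning only because $e_\theta$ is memoryless), and the contribution of $q$-killing during time spent at the maximum in the bounded variation case; this is where the actual work lies. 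Second, you invoke continuity of $\lambda$ at $x$: in the paper this is established only \emph{downstream} of the lemma, via \eqref{eq:diff} and \eqref{eq:lambda}, so using it here is circular unless you prove (right-)continuity directly from the excursion representation or import $\lambda^{q,\theta}(x)={Z^{q,\theta}}'(x)/Z^{q,\theta}(x)$ independently from the reflected-process literature (\cite{ivanovs_levy}, cf.\ \eqref{eq:Z}). With these two points supplied, your second route would be a valid — indeed more informative — substitute for the paper's elementary comparison argument.
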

\begin{proof}
In the following we will need to compare the sample paths of $\tilde
Y_t$ processes for different $\gamma$, hence throughout this proof we
write $\tilde Y^\gamma_t$ and $\tilde T^\gamma_y$
to make their dependence on $\gamma$ explicit.
For the ease of exposition, consider first the case $\gamma=0$, where
$\tilde Y^0_t$ is a process $X_t$ reflected at the level~$-x$.
Let $\delta\geq0$ be the starting level of the first excursion of
$X_t$ from the maximum exceeding height $x$; this is also the starting
level of the first excursion of $\tilde Y^1_t$
leading to reflection (i.e.\ an increase of $L^1_t$).
Note that on the event $\{\delta>h\}$ the times $\tilde T^0_{h}$ and
$\tilde T^1_{h}$ coincide.
In the following we exclusively work on the complementary event $\{
\delta\leq h\}$.

The lack of memory of $\tilde Y^1_t$ at its first passage times implies
that the number of excursions of $\tilde Y^1_t$ starting in $[0,h]$ and
leading to reflection
defines a (killed) L\'{e}vy process indexed by $h$. Hence on the event
$\{
\tilde T^1_{h}<\infty\}$ this number is Poisson distributed.
Using the lack of memory of $\tilde Y^1_t$ at $\tilde T^1_h$ we see that
\begin{align*}
&\p(\delta\leq h,\tilde T^1_h<\infty,\tilde T^1_{2h}=\infty)=\p
(\delta\leq h,\tilde T^1_h<\infty)\p(\tilde T^1_h=\infty)\\
&=O(h)(\lambda(x)h+o(h))=o(h).
\end{align*}
Hence considering $\{\delta\leq h,\tilde T^1_{h}<\infty\}$ we can
assume that $\tilde T^1_{2h}<\infty$ and also there is only one
excursion of $\tilde Y^1_t$
starting in $[0,2h]$ and leading to reflection. Comparison of the
sample paths of $\tilde Y^1_t$ and $\tilde Y^0_t$, see Figure~\ref
{fig:pic1}, reveals that $\tilde T^0_{h}<\infty$, because
the difference between them is bounded by $h$. For an arbitrary $\gamma
\leq1$ it is bounded by $(1-\gamma)h$, hence one can take
$h+(1-\gamma)h$ instead of $2h$
to finish this part of the proof.
%
\begin{figure}[t]
\centering
\includegraphics[width=3in]{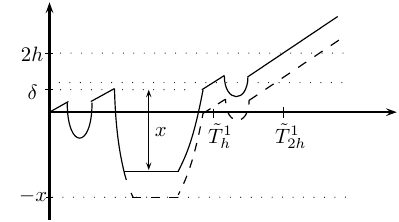}
\caption{A schematic sample path of $\tilde Y^1_t$ and $\tilde Y^0_t$
(with a dashed line).}\label{fig:pic1}
\end{figure}

Let us now consider $\{\tilde T^0_{h}<\infty\}$. Note that $\tilde
T^1_{h}=\infty$ can only happen as a consequence of killing according
to $e_\theta$.
Hence it is only required to show that this happens with probability
$o(h)$. In fact, it is enough to show that for a non-killed process
$\tilde Y_t^1$ it holds that
\[
\p^1(\delta\leq h,e_\theta\in(L_{\tilde T_\delta}-h,L_{\tilde
T_\delta}))=o(h),
\]
which follows from the independence of $e_\theta$. Again, for general
$\gamma$, $h$ in the above display is replaced by $(1-\gamma)h$.
\end{proof}

Combining Lemma~\ref{lem:main}, \eqref{eq:exp}, and \eqref
{eq:relTilda} we get for $\gamma<1$
%
\begin{equation}\label{eq:oh}
\p^\gamma(T_{h}<\infty)=\p^1(\tilde T_{h/(1-\gamma)}<\infty
)+o(h)=1-\frac{\lambda(x)}{1-\gamma} h+o(h)\text{ as }h\downarrow0.\hspace*{-10pt}
\end{equation}
Let us now return to the original set-up,
where $X_0=x$ and the reflecting barrier is placed at the level 0; we
use $\p_x$ to denote the corresponding law.
\begin{proof}[Proof of Theorem~\ref{thm:main}]
Assume that $\gamma<1$ and write using the strong Markov property
\begin{equation*}
\p^\gamma_x(T_y<\infty)=\p^\gamma_x(T_{x+h}<\infty)\p^\gamma
_{x+h}(T_y<\infty).
\end{equation*}
According to \eqref{eq:oh} we have $\p^\gamma_x(T_{x+h}<\infty
)=1-\frac{\lambda(x)}{1-\gamma} h+o(h)$ as $h\downarrow0$.
Hence $\p^\gamma_{x+h}(T_y<\infty)\rightarrow\p^\gamma
_{x}(T_y<\infty)$, and moreover
%
\begin{equation}\label{eq:diff}
\frac{\partial}{\partial x}\p^\gamma_{x}(T_y<\infty)=\frac
{\lambda(x)}{1-\gamma}\p^\gamma_{x}(T_y<\infty).
\end{equation}
Formally, this computation gives only the right derivative.

Let us identify $\lambda(x)$ using the existing theory. In
particular~\eqref{eq:Z} states that
$\p^0_{x}(T_y<\infty)=Z(x)/Z(y)$. Hence we obtain
$Z'(x)/Z(y)=\lambda(x)Z(x)/Z(y)$ yielding
%
\begin{equation}\label{eq:lambda}
\lambda(x)=Z'(x)/Z(x)\text{ for }x>0,
\end{equation}
which also shows that $\lambda(x)$ is continuous on
$(0,\infty)$.\vadjust{\goodbreak}

It is not hard to see that for any $\gamma<1$ and fixed $y>0$ the
function $\p^\gamma_x(T_y<\infty),x\in(0,y]$ is continuous and non-zero.
Hence for all $x\in(0,y)$ we have the following right derivative:
\[
\frac{\partial}{\partial x}\ln\p^\gamma_x(T_y<\infty)=\frac
{\lambda(x)}{1-\gamma},
\]
which together with $\p^\gamma_{y}(T_y<\infty)=1$ yields
\[
\ln\p^\gamma_x(T_y<\infty)=-\frac{1}{1-\gamma}\int_x^y\lambda
(u)\D u.
\]
Uniqueness of the solution is based on the fact that a continuous
function with right derivative 0 at every point of an interval is
constant on this interval.
So we have
%
\begin{equation}\label{eq:sol}\p^\gamma_{x}(T_y<\infty)=e^{-\frac
{1}{1-\gamma}\int_x^y\lambda(u)\D u},
\end{equation}
which immediately yields the power relation of Theorem~\ref{thm:main}.
\end{proof}
Finally, Theorem~\ref{thm:dividends} is a direct consequence of~\eqref
{eq:exp} and~\eqref{eq:lambda}.

\begin{remark}\label{rem:gammax}
When the refraction rate $\gamma(x)$ depends on the level, assuming
some regularity conditions (e.g. $\gamma(x)$ is continuous and bounded
away from 1),
one can still apply Lemma~\ref{lem:main} to derive the differential
equation~\eqref{eq:diff}. In this case the solution takes the form
\[
\p^\gamma_{x}(T_y<\infty)=e^{-\int_x^y\lambda(u)/(1-\gamma(u))\D u}.
\]
%
\end{remark}

\section{An application: Profit participation and capital
injection}\label{sec4}
As an application of Theorem~\ref{thm:main}, interpret $Y_t$ in \eqref
{eq:model} as an insurance surplus process at time $t$,
where $\gamma U_t$ is a profit participation scheme for an investor (a
proportion $\gamma$ of the profits is paid out to the investor) and,
in turn,
if needed the investor injects a minimal flow of capital into the
company to prevent its bankruptcy, i.e.\ to keep the surplus non-negative,
with $L_t$ being the total amount injected up to time $t$.
Alternatively, one can think of $\gamma U_t$ as tax payments for
profits up to time $t$ according to a loss-carry forward scheme with
constant tax rate $0<\gamma<1$
(cf. \cite{albrecher_hipp_taxation}) and $L_t$ would then be the
necessary amount of capital up to time $t$ to bail out the insurance
company to prevent bankruptcy.
Consider an upper limit~$e_\theta$ for the cumulative amount that the
investor is willing to inject,
which is assumed to be an independent exponential random variable with
rate parameter $\theta\geq0$ (it can be interpreted as impatience of
the investor).
Whenever this limit is exceeded the company is not bailed out anymore
and has to go out of business. Put differently, for each infinitesimal
required injection $h$, the investor will stop payments with
probability $\theta h$ (independently of everything else). This concept
extends the notion of classical ruin (which is retrieved for $\theta
=\infty$), and leads to an interesting trade-off between collected
profits (or tax) and injected capital.

The expected discounted profit (tax) payments for this model can be
written as
\[
V(\gamma)=\frac{\gamma}{1-\gamma}\e^{\gamma}_x\int_0^\infty
e^{-qt}\ind{L_t<e_\theta}\D\overline Y_t,\quad\gamma<1,
\]
where $q> 0$ is the discount rate. Note that each $\D\overline Y_t=\D
y$ corresponds to $\gamma/(1-\gamma)\D y$ tax payment.
Recalling that $\overline Y_t$ is continuous, and using a standard
change of variable argument with $\overline Y_t=y$ and $t=T_y$ we obtain
\begin{align}\label{eq:V}\quad
V(\gamma)&=\frac{\gamma}{1-\gamma}\e^{\gamma}_x\int
_x^\infty e^{-qT_y}\ind{L_{T_y}<e_\theta}\D y=
\frac{\gamma}{1-\gamma}\int_x^\infty\e^{\gamma}_x
[e^{-qT_y-\theta L_{T_y}}]\D y\nonumber\\[-8pt]\\[-8pt]
&=\frac{\gamma}{1-\gamma}\int_x^\infty\left(\frac{Z^{q,\theta
}(x)}{Z^{q,\theta}(y)}\right)^\frac{1}{1-\gamma}\D y,\nonumber
\end{align}
where in the second step we use Fubini's theorem and the independence
of $e_\theta$, and in the last step we invoke Theorem~\ref{thm:main}.
This formula is an extension of Equation (3.2) of \cite
{albrecher_renaud_taxLevy}, which is retained for $\theta\to\infty$
(the case without capital injections).

If we choose $\gamma=1$ (in which case the profit participation
reduces to a horizontal dividend barrier strategy), then we get in a
similar way by using Theorem~\ref{thm:dividends} that the expected
discounted dividends $V(1)$ are given by
\begin{align}
V(1)&=\e_x^1\int_0^\infty e^{-qt}\ind{L_t<e_\theta}\D U_t=\int
_0^\infty\e^1_x [e^{-q\rho_y-\theta L_{\rho_y}}]\D y\nonumber\\[-8pt]\\[-8pt]
&=\frac{1}{\lambda^{q,\theta}(x)}=Z^{q,\theta}(x)/{Z^{q,\theta}}'(x).\nonumber
\end{align}
As above, for $\theta\to\infty$ we get back to the classical formula
without capital injections, where $Z$ is replaced by $W$ (see e.g.
Equation (3) in \cite{renz}).

The quantity $V(\gamma)$ can consequently be computed explicitly
whenever the function~$Z$ has an explicit representation.
This is for instance the case for a Poisson stream of phase-type claims
(for a detailed discussion of explicit cases cf.~\cite
{scale_functions_examples}).

\subsection{A numerical example}
Let us consider a concrete simple example, for which the scale function
$W(x)$ has an explicit form, and hence
the expected discounted profit (tax) payments $V(\gamma)$ as
identified in~\eqref{eq:V} can be easily evaluated.
We assume that the driving process is a Cram\`er-Lundberg risk process
$X_t=x+ct- \sum_{n=1}^{N(t)}M_i$, where $N(t)$ is a homogeneous
Poisson process with rate 1, the insurance claims $M_i$ are independent
and identically distributed exponential random variables with mean $m$
and the constant premium intensity is chosen as $c=1$, so that the
drift of $X$ is then given by $\e X(1)=1-m$. Choose further the initial
capital $x=1$, the discount factor $q=0.01$, and the investor
impatience parameter $\theta=1$.

Figure~\ref{fig:plots1} depicts $V$ as a function of $\gamma$ for
different values of the drift. Essentially, the shape of these functions
is the same as in the case of classical ruin ($\theta=\infty$), but
higher in absolute value due to the longer life-time of the process.
This shape reflects that overly large values of $\gamma$ may lead to
an early ruin resulting in a smaller profit.

\begin{figure}[t]
\centering
\includegraphics[width=0.4\textwidth]{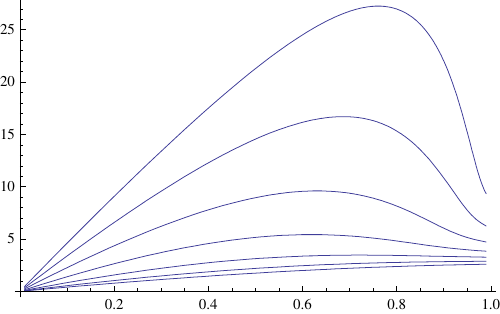}
\caption{$V(\gamma)$ for drift $=(0.5,0.4,0.3,0.2,0.1,0,-0.1)$;
from top to bottom.}\label{fig:plots1}
\end{figure}

\begin{figure}[t]
\centering
\subfigure[$V^1(\gamma)$ (thick) and $V^\infty(\gamma)$]{\includegraphics[width=0.4\textwidth]{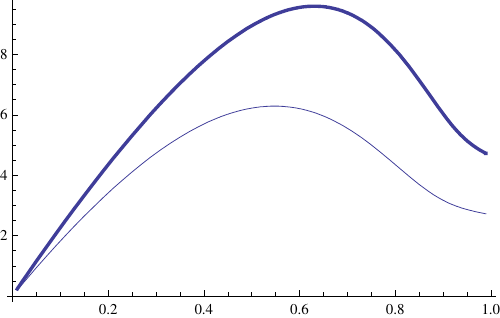}\label{fig:plots2a}}
\quad
\subfigure[$V^1(\gamma)-V^\infty(\gamma)$]{\includegraphics[width=0.4\textwidth]{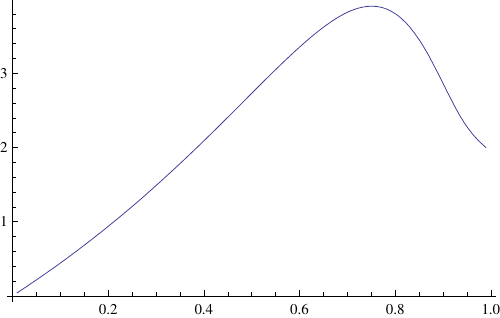}
\label{fig:plots2b}}
\caption{$V(\gamma)$ for $\theta=1$ and $\theta=\infty$.}
\end{figure}

In Figure~\ref{fig:plots2a}, this is visualized by comparing $V(\gamma
)$ for $\theta=1$ and $\theta=\infty$ for a fixed drift of $\e
X(1)=0.3$, and Figure~\ref{fig:plots2b}
depicts the increase of $V(\gamma)$ as compared to the case of
classical ruin. This expected increase of profit comes at the cost of
the capital injections,
whose expected value does not exceed $\e e_\theta=1$. The latter is in
fact a crude upper bound,
because of two reasons: no discounting, and the fact that cumulative
injections may never reach the threshold $e_\theta$.
These results show that on average it can be quite advantageous for an
investor to perform these capital injections,
in particular for those $\gamma$ for which the difference $V^1(\gamma
)-V^\infty(\gamma)$ is larger than~1.
If one would compare this difference to the actual expected discounted
investments, the effect would be even more pronounced.
The analysis of the net present value of injections is, however,
considerably more involved, and could be an interesting direction for
future work.

\section{Power identities under a relaxed ruin concept}\label{sec6}
It turns out that power relations similar to \eqref{eq:main} hold in
quite wide generality.
Essentially, it is only required that killing and modification (such as
reflection) of excursions of the (non-taxed) process is
done in a memoryless way (in other words, what happens after the first
passage time $T^0_y$ is independent from the past and has the same law
as the original process started in $y$).
Of course, one still has to handle model-specific technical details
similar to those contained in Lemma~\ref{lem:main}.

For illustration, let us consider an example from~\cite{AGS} and~\cite
{volkmar}, where bankruptcy is declared at some rate $\theta>0$ when
the risk process is below zero (there is no reflection from below).
In other words, the killing occurs when the cumulative time $X_t$ spent
below zero surpasses an independent exponential random variable
$e_\theta$ (one can also introduce dependence of $\theta$ on the
level, but for clarity we refrain from doing so, and only note that
generalizations of power identities
to arbitrary measurable, locally bounded functions $\theta(x)$ do not
cause additional problems). As before we assume that $X_t$ is a
spectrally negative L\'{e}vy process (no reflection from below).
The concept of occupation times plays an important role in this
setting. Let
\[
M(A,t)=\int_0^t\ind{X_s\in A}\D s
\]
be the time $X$ spends in a Borel set $A$ up to time $t$.
\begin{theorem}\label{thm:last}Consider the model~\eqref{eq:model}
without reflection from below ($a=-\infty, b=x\geq0$), and let $\nu
_\theta$ be the time of bankruptcy:
\[
\nu_\theta=\inf\{t\geq0:M((-\infty,0),t)>e_\theta\}.
\]
Then for all $\gamma<1$ and $q\geq0$ it holds that
\[
\e^{\gamma}_x [e^{-q T_y};T_y<\nu_\theta]=\left(\e^0_x [e^{-q
T_y};T_y<\nu_\theta]\right)^\frac{1}{1-\gamma}.
\]
\end{theorem}
\begin{proof}
Without real loss of generality one can assume that $q=0$.
One can repeat the arguments from the previous section. In fact, many
things simplify since there is no process $L_t$.
In particular, paths of the processes $\tilde Y^\gamma_t$ (and $X_t$)
are the same, but the intervals of times
when the processes are in danger of bankruptcy are different for
different~$\gamma$, and so the killing points are different.
In order to (re-)establish Lemma~\ref{lem:main}, we have to show that
the differences between `in danger' sets up to the time $\tau_h^+$ are
small in certain sense.
It is enough to show that
%
\begin{equation}\label{eq:M}\p(M([-x+\gamma h,-x+h),\tau
_h^+)>e_\theta)=o(h)
\end{equation}
as $h\downarrow0$. Moreover, to establish the differential equation
\eqref{eq:diff} we have to show (for the reason of continuity) that
%
\begin{equation}\label{eq:M2}M(\{x\},t)=0 \text{ a.s.\ for any }t,x.
\end{equation}
The latter fact is well-known, see~\cite[Prop.~I.15]{bertoin}.
So it is only left to show that \eqref{eq:M} holds.

The probability in \eqref{eq:M} can be bounded from above by
\[
\p(\tau_{x-h}^-<\tau_h^+)\p(M([-(1-\gamma)h,(1-\gamma)h],\tau
^+_{x+(1-\gamma)h})>e_\theta).
\]
In short, the process must go below the upper boundary of the interval,
then we start it at the lower boundary and make the strip twice as
large, so that it starts in the middle.
The first probability is given by $1-W(x-h)/W(x)=W'_-(x)/W(x)h+o(h)$,
and the second decreases to 0 as $h\downarrow0$, because
$M([-h,h],\tau_y^+)\rightarrow0$ for any $y>0$ a.s. (use \eqref
{eq:M2} and the fact that either $\oX_t\rightarrow\infty$ a.s.\ or
$X_t\rightarrow-\infty$ a.s.).
This concludes the proof.
\end{proof}

\begin{corollary}
For the model of Theorem~\ref{thm:last} and $q\geq0$ it holds that
\begin{align*}
&\e^{\gamma}_x [e^{-q T_y};T_y<\nu_\theta]=\left(\frac
{Z^{q,\Phi}(x)}{Z^{q,\Phi}(y)}\right)^{\frac{1}{1-\gamma}},
&\gamma<1,\\
&\e^{\gamma}_x [e^{-q \rho_y};\rho_y<\nu_\theta]=\exp\left
(-\frac{{Z^{q,\Phi}}'(x)}{Z^{q,\Phi}(x)}y\right), &\gamma=1,
\end{align*}
where $\Phi$ is the unique positive solution of $\phi(\Phi)=q+\theta$.
\end{corollary}
\begin{proof}
It holds that
\[
\e_x [e^{-q \tau_y^+};\tau_y^+<\nu_\theta]=Z^{q,\Phi
}(x)/Z^{q,\Phi}(y),
\]
which can be easily deduced from the results by~\cite
{loeffen_occupation} or~\cite{albrecher2013risk}.
The rest follows from Theorem~\ref{thm:last} and its proof which
employs the ideas of Section~\ref{sec5}.
\end{proof}

\begin{appendix}
\section*{Appendix}
In the following we present an algorithm defining a two-sided
refraction of a c\`adl\`ag sample path $X_t$ corresponding to the
interval $[a,b]$.
It is assumed that $X_0\in[a,b]$, and $\gamma_L,\gamma_U\leq1$ to
avoid the case of inhibition.
The triplet of processes $(Y_t,L_t,U_t)$ is defined iteratively as
follows (cf. Figure~\ref{fig:pic2} depicting refraction from above
at~$b$ and reflection from below at~$a$).\\

{\bf Algorithm}:
\begin{itemize}
\item[]{\bf Initialization} ($n=0$):
$Y^{(0)}_t=X_t,U^{(0)}_t=0,L^{(0)}_t=0,t_0=0$ and
$a^{(1)}=a,b^{(1)}=b,$
\[
t_1=\inf\{t\geq0:X_t\notin[a,b]\}.
\]
\item[]{\bf Step} ($n=n+1$):
$X_t^{(n)}=Y^{(n-1)}_{t_n}+X_{t_n+t}-X_{t_n}$ for $t\geq0$.
\begin{itemize}
\item[]{\bf If} $X^{(n)}_0\geq b^{(n)}$: $L^{(n)}_t=0$ and
$Y_t^{(n)}=X_t^{(n)}-\gamma_U U_t^{(n)}$ is the refraction of
$X_t^{(n)},t\geq0$ from above at the level $b^{(n)}$. Put
\[
\Delta_n=\inf\{t\geq0:Y_t^{(n)}<a^{(n)}\}
\]
and $t_{n+1}=t_n+\Delta_n, a^{(n+1)}=a^{(n)},b^{(n+1)}=\overline
{Y^{(n)}_{\Delta_n}}$.
\item[]{\bf If} $X^{(n)}_0\leq a^{(n)}$: $U^{(n)}_t=0$ and
$Y_t^{(n)}=X_t^{(n)}+\gamma_L L_t^{(n)}$ is the refraction of
$X_t^{(n)},t\geq0$ from below at the level $a^{(n)}$. Put
\[
\Delta_n=\inf\{t\geq0:Y_t^{(n)}>b^{(n)}\}
\]
and $t_{n+1}=t_n+\Delta_n, a^{(n+1)}=\underline{Y^{(n)}_{\Delta
_n}},b^{(n+1)}=b^{(n)}$.
\end{itemize}
\end{itemize}
Finally, we set
\[
Y_t=Y^{(n)}_{t-t_n},L_t=\sum_{i=0}^{n-1}L^{(i)}_{\Delta
_i}+L^{(n)}_{t-t_n},U_t=\sum_{i=0}^{n-1}U^{(i)}_{\Delta
_i}+U^{(n)}_{t-t_n} \text{ for }t\in[t_n,t_{n+1}).
\]
Observe that the above procedure defines the process $Y_t$ for all
$t\geq0$, i.e. $t_n\rightarrow\infty$ as $n\rightarrow\infty$, because
a c\`adl\`ag function can not cross the interval $[a,b]$ infinitely
many times in finite time; here we use the fact that the intervals
$[a^{(n)},b^{(n)}]$ are increasing.
Careful examination of the above algorithm (together with known
properties of a one-sided refraction) shows that
\[
Y_t=X_t+\gamma_L L_t-\gamma_U U_t,
\]
where $L_t$ and $U_t$ are non-decreasing c\`adl\`ag functions.
Moreover, the points of increase of $L_t$ and $U_t$ are contained in
the sets $\{t\geq0:Y_t=\underline{Y_t}\wedge a\}$ and $\{t\geq
0:Y_t=\overline{Y_t}\vee b\}$ respectively.
It may be interesting to find an explicit representation of the
two-sided refraction similar to those given by~\cite{andersen_mandjes}
and~\cite{ramanan_skorohod0a} for the two-sided reflection.
\end{appendix}

\bibliographystyle{abbrv}
\bibliography{./samos}
\end{document}